\pdfoutput=1
\documentclass[a4paper,12pt]{article}
\usepackage{amsmath,amsfonts,amsthm,amscd,amssymb,slashed,graphicx,tikz,stackengine}
\usepackage[colorlinks,allcolors=blue]{hyperref}
\usetikzlibrary{shapes,arrows}
\usetikzlibrary{matrix,arrows}
\usepackage{geometry}
\newcommand{\R}{\mathbb{R}}

\newcommand{\C}{\mathbb{C}}

\DeclareMathOperator{\SU}{SU}
\DeclareMathOperator{\U}{U}
\DeclareMathOperator{\SE}{SE}
\DeclareMathOperator{\E}{E}
\newcommand{\lf}[1]{\sigma^{#1}}

\newcommand{\p}{\partial}
\newcommand{\beq}{\begin{eqnarray}}
\newcommand{\eeq}{\end{eqnarray}}
\renewcommand{\i}{\mathrm{i}}
\renewcommand{\d}{\mathop{}\!\mathrm{d}}
\newcommand{\non}{\nonumber\\}
\newcommand{\commawedge}{\raisebox{-2pt}{\,\stackanchor[1pt]{$\wedge$}{,}}\,}

\newcommand{\HH}[1]{\mathbb{H}^{1}_{#1}}

\newtheorem{theorem}{Theorem}%
\newtheorem{proposition}[theorem]{Proposition}% 
\newtheorem{definition}{Definition}%
\newtheorem{lemma}{Lemma}
\newtheorem{corollary}{Corollary}

\begin{document}

\begin{titlepage}
\renewcommand{\thefootnote}{\fnsymbol{footnote}}
\begin{center}
{\huge Cartan connections for an infinite family of integrable vortices}\\

\bigskip
{\large Sven Bjarke Gudnason$^{1,2}$\footnote{gudnason(at)henu.edu.cn},
Calum Ross$^{3,4}$\footnote{Calum.Ross(at)edgehill.ac.uk}}\\

\bigskip
\bigskip
\noindent
$^1$Institute of Contemporary Mathematics, School of
  Mathematics and Statistics, Henan University, Kaifeng, Henan 475004,
  P.~R.~China\\
$^2$Department of Physics, Chemistry and Pharmacy,
  University of Southern Denmark, Campusvej 55, 5230 Odense M,
  Denmark\\
$^3$Department of Computer Science, Edge Hill University, St Helens Rd., Ormskirk L39 4QP, UK\\
$^4$Research and Education Center for Natural Sciences, Keio
  University, 4-1-1 Hiyoshi, Yokohama, Kanagawa 223-8521, Japan\\
  \bigskip
\begin{abstract}
An infinite family of integrable vortex equations is studied and related to the Cartan geometry of the underlying Riemann surfaces. This Cartan picture gives an interpretation of the vortex equations as the flatness of a non-Abelian connection. 
Solutions of the vortex equations also give rise to magnetic zero-modes for a certain Dirac operator on the lifted geometry.
The family of integrable vortex equations is parametrised by a positive number $n$, that is equal to unity in the standard case and an integer in the case of polynomial vortex equations; finally, it may be extended to any positive real number.
\end{abstract}
{\small{\bf Keywords: }Topological vortices, Cartan geometry, Integrability, Manton's five vortex equations}
\end{center}
\end{titlepage}
\renewcommand{\thefootnote}{\arabic{footnote}}

\section{Introduction}

Manton put forward the idea of studying integrable vortices from the
point of view of the \emph{differential equations}, instead of the
Lagrangians, which gave rise to the notion of \emph{five} integrable
vortices \cite{Manton2} -- four of them were known in the literature
and one was new and is now dubbed the Bradlow vortex equation, due to its resemblance to the Bradlow limit of vortices on a compact domain or manifold \cite{Bradlow:1990ir}.
The Abelian integrable vortex equations describe a pair $(A,\phi)$ of a gauge potential and a complex scalar field, and can be put in the form
\begin{align}
  \d_A\phi\wedge \d z = 0,\label{eq:BPS1}\\
  \d A = \left(-C_0 + C|\phi|^2\right)\omega_0.\label{eq:BPS2}
\end{align}
The first equation is the usual self-duality equation giving rise
to a covariant holomorphic condition on the complex scalar field
$\phi\ :\ M_0\to\mathbb{C}$ on a manifold $(M_0,g_0)$ with metric
$g_0 = \Omega_0\d z\d\bar{z}$ and K\"ahler form $\omega_0$.
$\Omega_0$ is the conformal factor of the metric $g_0$ on $M_0$,
$\d A$ is the Abelian field strength,
$C_0$ is a constant that is fixed by the curvature of
$M_0$ by integrability and finally $C$ fixes the curvature of the
Baptista manifold that has the (degenerate) metric
$g = |\phi|^2g_0$,
valid away from the vortex zeros.
Indeed the geometry of $(M_0,g_0)$ is a fixed-curvature manifold with
curvature $K_0=C_0$ with $C_0=-1,0,1$ corresponding to the hyperbolic
plane ($\mathbb{H}^2$), the Euclidean plane $\E_2$ and the 2-sphere
($S^2$), respectively.

The general form for the vortex solutions to Manton's five vortex
equations is
\beq
|\phi|^2 = \frac{\left(1+C_0|z|^2\right)^2}{\big(1+C|f(z)|^2\big)^2}\left|\frac{\d f}{\d z}\right|^2,
\label{eq:phi^2solns}
\eeq
with the constant coefficient $C_0$ and $C$ corresponding to the type
of the vortex equations: $(C_0,C)=(-1,-1)$ is Taubes' equation,
$(C_0,C)=(0,1)$ is the Jackiw-Pi equation, $(C_0,C)=(1,1)$ is Popov's
equation, $(C_0,C)=(-1,0)$ is the Bradlow vortex equation and finally
$(C_0,C)=(-1,1)$ is the Ambj\o rn-Olesen vortex equation.
The vortex positions or vortex zeros correspond to the ramification points of the holomorphic function $f(z)$, i.e.~where its $z$-derivative vanishes.
The Bradlow vortex equation has a constant
vortex polynomial making the magnetic flux proportional to the area of
$M_0$, which in turn is related to the number of vortices on the
manifold. 
In this case, due to the simplicity of the equations, Gudnason and
Nitta found additional solutions \cite{Gudnason:2017jsn}, in addition to those of the form of
Eq.~\eqref{eq:phi^2solns}. 

Shortly after Manton's paper, Contatto and Dunajski showed that the
five vortex equations arise as symmetry reductions of anti-self-dual
Yang-Mills theory in four dimensions with different symmetry groups
\cite{CD}.
In particular, the symmetry groups for the reductions of 4-dimensional
Yang-Mills theory corresponding to $C_0=-1,0,1$ are $\SU(1,1)$,
$\SE_2$ and $\SU(2)$, respectively. 

Then in a series of papers by Ross and Schroers
\cite{Ross:2021,RS1,RS2}, the vortex equations which are characterised
by the reduction groups $\SU(1,1)$, $\SE_2$ and $\SU(2)$ are related
to the Maurer-Cartan structure on these group manifolds; the key idea
being that these group manifolds are circle bundles over Riemann
surfaces with a Hopf projection ($\pi$) from the group down to the
Riemann surface.

In Ref.~\cite{Gudnason:2022}, Gudnason proposed to extend the vortex
polynomial from a linear polynomial in $|\phi|^2$ to a quadratic
polynomial, which allowed for the inclusion of the BPS Chern-Simons
equation in the class of vortex equations.
The BPS Chern-Simons vortex equation is not integrable. Nevertheless,
four new integrable equations where identified in
Ref.~\cite{Gudnason:2022}\footnote{The most trivial new vortex equation (in addition to Manton's five equations) is the Laplace vortex equation found in Ref.~\cite{CD}, that simply consists of an empty vortex polynomial
($C_0=C=0$) with delta-function sources at the vortex centres; this was included in Ref.~\cite{Gudnason:2022} as well.
}
being the four known vortex equations
with $|\phi|^2$ replaced by $|\phi|^4$: These are the Jackiw-Pi, the
Taubes, the Popov and the Ambj\o rn-Olesen vortex equations.
Finally, it was mentioned in the discussion of
Ref.~\cite{Gudnason:2022} that these four vortex equations form an
infinite family of integrable vortex equations with $|\phi|^2$
replaced by $|\phi|^{2n}$, $n=2,3,\ldots$, as
\beq
\d A = \left(-C_0 + C_{2n}|\phi|^{2n}\right)\omega_0,
\label{eq:vtx2_2n_Gudnason:2022}
\eeq
with $n$ a positive integer and the constants taking the same values as their counterparts from Manton's equations.
That is, the vortex coefficients when non-vanishing have a normalised unit modulus ($|C|=1$), which as argued in Ref.~\cite{Manton2} can always be achieved by rescaling the scalar field $\phi$ and rescaling the spatial coordinate $z$.

Other developments that followed soon after Manton's paper on the five vortex equation \cite{Manton2} can be summarised as follows.
Bobenko and Sridhar established a connection between the five vortex equations and structure-preserving discrete conformal maps \cite{Bobenko:2017zso}.
Finally, Gudnason and Ross generalised the five vortex equations to include defects \cite{Gudnason:2021bkw}.
Further work emanating from the five vortex equations paper, which involves only a few or a single of the equations will not be listed here.

In this paper, we demonstrate that the possibility mentioned in Ref.~\cite{Gudnason:2022} of an
infinite family of integrable vortex equations is true and that
they have a geometric realisation in terms of Maurer-Cartan structure on
the group manifolds $\SU(1,1)$, $\SE_2$ and $\SU(2)$.
The treatment can be unified in two different ways: In one choice of
coordinates, the vortex equations are rescaled differently for each
$n=1,2,3,\ldots$, but are described by the same Cartan geometry with
the same Hopf-type projection to a Riemann surface;
in the other the vortex equations are naturally normalised, but the
geometry depends on $n$, which can be interpreted as the ``size'' 
or ``scale-factor'' of the Riemann surface.

The vortex equations will here be called the $n$-vortex equations, but
note that this should not be confused with the vortex winding number,
which is $N$ and independent from $n$.
Indeed, the vortex winding number, $N$, is given by 
\beq
N = \frac{1}{2\pi}\int_{M_0}\d A.
\eeq
We note that from the point of view of the equations, we can
generalise everything, including the geometries, to
$n\in\mathbb{R}_{>0}$, which does not correspond to vortex
polynomials, but is still well defined from the point of view of the equations.
We note, however, that in all cases we must have $n>0$ strictly
positive.

This paper is organised as follows.
In Sec.~\ref{sec:fixed_geometry} the geometry is held fixed corresponding to an $n$-dependent family of vortex equations, where the local geometry is presented in Sec.~\ref{sec:local_geometry}, the lift to higher-dimensional Cartan geometry is given in Sec.~\ref{sec:cartan_geometry} and finally, the zero-modes are discussed in Sec.~\ref{sec:zeromodes}.
In Sec.~\ref{sec:normalised_vtxeqs} the vortex equations are normalised as in Refs.~\cite{Manton2,Gudnason:2022}, but the geometry inherits non-trivial $n$ dependence that for the 2-sphere can be viewed as a radius proportional to $\sqrt{n}$.
Finally, Sec.~\ref{sec:conclusion} concludes with a brief discussion of how this work relates to some other ongoing work and mentions some future directions.

\section{Fixed geometry with infinite family of vortex equations}\label{sec:fixed_geometry}

\subsection{Local geometry on the Riemann surface}\label{sec:local_geometry}

We will start with the geometric picture, expanded from that given in
Ref.~\cite{Gudnason:2022}, i.e.~that $(M_0,g_0)$ is a Riemann
surface of Gauss curvature $K_0=C_0$ with metric
\beq
  g_0 = \frac{4}{\left(1+C_0|z|^{2}\right)^{2}}\d z\d\bar{z}
  = \Omega_0\d z\d\bar{z}
  = e_0\bar{e}_0,
\eeq
where 
\beq
e_0=\frac{2}{1+C_0|z|^{2}}\d z,
\label{eq:local_frame}
\eeq
is the local complex co-frame.
This complex co-frame obeys the structure equation 
\beq
\d e_0-\i\Gamma_0\wedge e_0=0,
\eeq
with the spin connection 
\beq
\Gamma_0=\i C_0\frac{z\d\bar{z}-\bar{z}\d z}{1+C_0|z|^{2}}.
\label{eq:spin_conn}
\eeq
The spin connection is also related to the complex co-frame through the
Gauss equation 
\beq
\mathcal{R}_0=\d\Gamma_0=\frac{\i}{2}K_0e_0\wedge\bar{e}_0,
\eeq
from which we can see that the scalar curvature is $K_0=C_0$ as promised.
For a holomorphic function $f:(M_0,g_0)\to (M_{2n},g_{2n})$ between two
such Riemann surfaces we can pull back the structure and Gauss
equations from $M_{2n}$ to $M_0$ and show that they imply the vortex
equations.
Define $f^{*}e_{2n}=\phi^{n}e_0$ and compute the pull back of the
structure equation as
\begin{align}
0 &=\d f^{*}e_{2n}-\i f^{*}\Gamma_{2n}\wedge f^{*}e_{2n}\non
 	&=\phi^{n}\left(\d e_0-\i f^{*}\Gamma_{2n}\wedge e_0\right)+n\phi^{n-1}\d\phi\wedge e_0\non
 	&=\phi^{n}\left(\d e_0-\i\Gamma_0\wedge
e_0\right)+n\phi^{n-1}\left(\d\phi-\i\phi A\right)\wedge e_0,
\label{eq:structure_e2n}
\end{align}
where we have defined the Abelian connection
\beq
A=\frac{\left(f^{*}\Gamma_{2n}-\Gamma_0\right)}{n}.
\label{eq:A}
\eeq
The first bracket in the last line of Eq.~\eqref{eq:structure_e2n}
vanishes by the structure equation on $M_0$ and the second bracket is
the self-dual vortex equation (notice that $e_0\propto\d z$).

The second vortex equation arises as the pull back of the Gauss
equation on $M_{2n}$
\beq
\d\Gamma_{2n}=C_{2n}\frac{\i}{2}e_{2n}\wedge\bar{e}_{2n}.
\eeq
to $M_0$:
\beq
\frac{1}{n}f^{*}\d\Gamma_{2n}&=\frac{\i C_{2n}}{2n}|\phi|^{2n}e_0\wedge\bar{e}_0,
\eeq
and add $\d\Gamma_0/n$ to obtain the Abelian field strength
\beq
F_{A}=\d A
=\left(-\frac{C_0}{n}+\frac{C_{2n}}{n}|\phi|^{2n}\right)\frac{\i}{2}e_0\wedge\bar{e}_0,
\label{eq:vtx2}
\eeq
which indeed is the second vortex equation with scalar curvatures $K_0=C_0$ and
$K_{2n}=C_{2n}$. 
Notice the scaling of vortex equations by $1/n$: this comes
naturally in the fixed geometry case.
As a comment on our notation: we normalise $C_0$ and $C_{2n}$ to unit modulus when non-vanishing throughout the paper\footnote{Explicitly, $C_0,\in\{-1,0,1\}$ and $C_{2n}\in\{-1,0,1\}$ with the requirement that the right-hand side of the equation has a non-negative integral. Also, it is only when $C_{2n}\neq 0$ that we get a new type of vortex equation. When it vanishes the vortex equations are known; it is either the Bradlow ($C_0=-1$) \cite{Manton2} or the Laplace ($C_0=0$) \cite{CD} vortex equation. 
}
for this reason we put the $n$-dependence explicitly in the above equation.
Compared with the vortex equation \eqref{eq:vtx2_2n_Gudnason:2022} which does not have any $n$-dependence (normalised coefficients when non-vanishing), the introduction of the $n$-dependence here is made to be able to keep the geometry fixed -- at the price of having a non-normalised vortex equation. 
Later in Sec.~\ref{sec:normalised_vtxeqs}, we will discuss the normalised vortex equation's case that leads to $n$-dependent geometries.

The vortex equations are equivalent to the structure and Gauss
equations for the degenerate form $\tilde{e}=f^*e_{2n}=\phi^{n}e_0$ on
$M_0$. 
This degenerate form ($\tilde{e}$) will possess 
zeros at the
points corresponding to the zeros of the vortices, which is a usual
property of the Baptista metric:
\beq
\tilde{g}=\tilde{e}\bar{\tilde{e}}=|\phi|^{2n}e_0\bar{e}_0=|\phi|^{2n}g_0.
\eeq

Following Ref.~\cite{Ross:2021}, we know that the structure and Gauss
equations on $M_{2n}$ are equivalent to the flatness of the
non-Abelian connection
\beq
\hat{A}=-\Gamma_{2n}t_0+\frac\i2\left(e_{2n}t_{-}-\bar{e}_{2n}t_{+}\right),
\label{eq:Ahat}
\eeq
where $t_0,t_{\pm}=t_{1}\pm \i t_{2}$ are the generators of a Lie
group with commutation relations 
\beq
[t_0,t_{\pm}]=\mp\i t_{\pm}, \qquad [t_{+},t_{-}]=-2\i C_{2n}t_0.
\label{eq:commutators}    
\eeq
Depending on $K_{2n}=C_{2n}$ we will either have an $\mathfrak{su}(2)$
($C_{2n}=1$) or an $\mathfrak{su}(1,1)$ ($C_{2n}=-1$) structure. 
The pullback of $\hat{A}$ by $f$ 
\beq
f^{*}\hat{A}=-\left(nA+\Gamma_0\right)t_0+\frac\i2\left(\phi^{n}e_0t_{-}-\bar{\phi}^{n}\bar{e}_0t_{+}\right),
\eeq
is a flat non-Abelian connection encoding the structure and Gauss
equations for the degenerate co-frame $\tilde{e}$: In other words, it
encodes the vortex equations.
To see this explicitly we compute the curvature $F_{f^{*}\hat{A}}$:
\begin{align}
  F_{f^{*}\hat{A}}&= \d f^*\hat{A} + \frac12\Big[f^*\hat{A}\commawedge f^*\hat{A}\Big]\non
  &=-n\left[\d A-\left(-\frac{C_0}{n}+\frac{C_{2n}}{n}|\phi|^{2n}\right)\frac{\i}{2}e_0\wedge\bar{e}_0\right]t_0\\
&\phantom{=\ }+\frac{\i}{2}n\left[\phi^{n-1}\left(\d\phi -\i A\phi\right)\wedge e_0 t_{-}-\bar{\phi}^{n-1}\left(\d\bar{\phi} +\i A\bar{\phi}\right)\wedge \bar{e}_0 t_{+}\right],
\end{align}
and see that its vanishing is equivalent to the vortex equations being
satisfied.

By using the definition of the generalised Baptista metric, $\Omega_{2n}=|\phi|^{2n}\Omega_0$, the solution for the scalar field can readily be written down in the holomorphic gauge as
\begin{equation}
\phi^n = 
\frac{\left(1+C_0|z|^2\right)}{\left(1+C_{2n}|f(z)|^2\right)}\frac{\d f}{\d z},
\end{equation}
with $f(z)$ a holomorphic function whose ramification points correspond to vortex positions.
The gauge field is then by the self-dual equation \eqref{eq:BPS1}
\beq
A = -\frac{\i}{n}\p_{\bar{z}}\log\left(\phi^n\right)\d\bar{z}.
\eeq

\subsection{Cartan geometry with circle bundles over Riemann surfaces}\label{sec:cartan_geometry}

In this section, we generalise the results of
Refs.~\cite{Ross:2021,RS1,RS2} where the usual vortex equations on
$\R^{2},S^{2},\mathbb{H}^{2}$ are related to the Maurer-Cartan structure on the
group manifolds $\SE_{2}$, $\SU(2)$, or $\SU(1,1)$, respectively, to the
infinite family of $n$-vortices.

The key idea is that these group manifolds are circle bundles over the
Riemann surfaces.
We then have a Hopf type projection, $\pi$, from the group down to the
Riemann surface.
The group structure is encoded in the Lie algebra
\eqref{eq:commutators}. 
Using the notation $\mathbb{H}^{1}_{C}$ for the group manifolds
we define them as subsets of $\C^{2}$ as 
\beq
\mathbb{H}^{1}_{C}=\Big\{(z_{1},z_{2})\in \C^{2}\Big||z_{1}|^{2}+C|z_{2}|^{2}=1\Big\},
\eeq
with $\mathbb{H}^{1}_{1}=\SU(2)$, $\mathbb{H}^{1}_0=\SE_{2}$, and
$\mathbb{H}^{1}_{-1}=\SU(1,1)$.
A typical element of the group is
\beq
h=\begin{pmatrix}
z_{1}&-C\bar{z}_{2}\\
z_{2}&\bar{z}_{1}
\end{pmatrix}.
\eeq
The Maurer-Cartan structure is given by
\beq
h^{-1}\d h =\sigma^{0}t_0+\sigma^{1}t_{1}+\sigma^{2}t_{2}=\sigma^{0}t_0+\frac{1}{2}\left(\sigma t_{-}+\bar{\sigma}t_{+}\right),
\eeq
with the left-invariant one forms being 
\begin{align}
\sigma&=\sigma^{1}+i\sigma^{2}=2\i\left(z_{1}\d z_{2}-z_{2}\d z_{1}\right), \\
\sigma^{0}&=\i\left(\bar{z}_{1}\d z_{1}+C\bar{z}_{2}\d z_{2}-z_{1}\d\bar{z}_{1}-C z_{2}\d\bar{z}_{2}\right).
\end{align}
The projection is
\beq
\pi \ :\ \mathbb{H}^{1}_{C}\to M, \qquad
h\mapsto z=\frac{z_{2}}{z_{1}},
\eeq
and there is a (local) section
\beq
s\ :\ z\mapsto \frac{1}{\sqrt{1+C|z|^{2}}}\begin{pmatrix}
1 & -C\bar{z}\\
z & 1
\end{pmatrix}.
\eeq
The co-frames on the Riemann surfaces are then related to the
left-invariant one forms through 
\begin{align}
\pi^*e &= -\i\frac{\bar{z}_{1}}{z_{1}}\sigma,\qquad&
s^*\sigma &= \i e,\label{eq:pi_s1}\\
\pi^*\Gamma&=-\sigma^{0}+\i\d\ln\left(\frac{z_{1}}{\bar{z}_{1}}\right),\qquad&
s^*\sigma^0 &= -\Gamma.\label{eq:pi_s2}
\end{align}
For the integrable vortex equations, we thus have the following picture:
\begin{center}
  \begin{tikzpicture}[scale=2.0]
    \draw (0,1) node [left] {$\HH{C_0}$};
    \draw [->] (0,1) -- (1,1);
    \draw [->] (-0.2,0.8) -- (-0.2,0.2);
    \draw (0.5,1.1) node {\small $U$};
    \draw (0,0.5) node [left] {$\pi$};
    \draw (1,1) node [right] {$\HH{C_{2n}}$};
    \draw (0,0) node [left] {$M_0$};
    \draw [->] (0,0) -- (1,0);
    \draw [->] (1.2,0.8) -- (1.2,0.2);
    \draw (0.5,0.1) node {\small $f$};
    \draw (1.4,0.5) node [left] {$\pi$};
    \draw (1,0) node [right] {$M_{2n}$};
  \end{tikzpicture}
\end{center}
Here and in the above, $C$ can be either $C_0$ or $C_{2n}$ depending on whether the projection is used at the left-hand or right-hand side of the above diagram, and the same goes for the complex co-frame $e$ and the spin connection $\Gamma$.
$U$ is a bundle map covering the rational function $f$ which encodes a vortex. This diagram means that $\pi\circ H=f\circ \pi$. As in Ref.~\cite{Ross:2021}, the bundle map $U$ has the following form
\begin{equation}
    U:(z_{1},z_{2})\mapsto \frac{1}{\sqrt{\vert F_{1}\vert^{2}+C_{2n}\vert F_{2}\vert^{2}}}\begin{pmatrix}
        F_{1}&-C_{2n}\bar{F}_{2}\\
        F_{2}&\bar{F}_{1}
    \end{pmatrix},
\end{equation}
with $F_{i}:\HH{C_{0}}\to \C$ where $\vert F_{1}\vert^{2}>-C_{2n}\vert F_{2}\vert^{2}$. The functions $F_{i}$ are explicitly related to the rational function $f$ through $f=s^{*}\left(F_{2}/F_{1}\right)$. Sometimes it is convenient to split $f$ into two functions $f_{1},f_{2}$ as $f=f_{2}/f_{1}$. See Refs.~\cite{Manton2,Manton1, Witten1,Ross:2021} for explicit examples of the rational functions used to construct vortices.

The relation between $\hat{A}$ on $M_{2n}$ and $h^{-1}\d h$ on $\HH{C_{2n}}$ is
\beq
s^{*}\left(h^{-1}\d h\right)&=\hat{A},
\eeq
whereas moving up the diagram, we have by the identities
\eqref{eq:pi_s1}-\eqref{eq:pi_s2}, that
\beq
\pi^{*}\hat{A}&=r_{z_{1}}^{-1}\left(h^{-1}\d h\right)r_{z_{1}}+r_{z_{1}}^{-1}\d r_{z_{1}},
\eeq
with 
\begin{equation*}
    r_{P}=\begin{pmatrix}
        \frac{\overline{P}}{|P|}&0\\
        0& \frac{P}{|P|}
    \end{pmatrix},
\end{equation*}
the map from $\HH{C_0}\backslash \left(\bigcup S_{q_{j}}\right)\to \HH{2n}$, where $S_{q_{j}}$ are the circles where the function $P(z_{1},z_{2})$ vanishes. In this case $P(z_{1},z_{2})=z_{1}$ so we just have to remove the fibre above $z_{1}=0$. 

Then modifying Definition 4.2 in Ref.~\cite{Ross:2021}, we define an
$n$-vortex configuration on $\HH{C_0}$ as follows:
\begin{definition}
The pair $(\Phi,A)$ of a complex function and a one-form is called an
$n$-vortex configuration on $\HH{C_0}$, if it satisfies the $n$-vortex
equations
\begin{align}
    \left(\d \Phi +\i A \Phi\right)\wedge \sigma &=0,\\
    F_{A}=\d A&= -\frac{\i }{2}\left(\frac{C_0}{n}-\frac{C_{2n}}{n}|\Phi|^{2n}\right)\bar{\sigma}\wedge\sigma.
\end{align}
\end{definition}
As is true for the $n=1$ case, these equations have a $\U(1)$ gauge invariance,
\beq
    \left(\Phi,A\right)\mapsto \left(e^{-\i \beta}\Phi, A+\d \beta\right), \qquad \text{where } \beta \in\C^{\infty}\left(\HH{0}\right).
\eeq
Theorem 4.3 in Ref.~\cite{Ross:2021} then applies to these vortex
configurations with minimal modifications, namely the vortex gauge for
a flat $\text{Lie}\left(\HH{C_{2n}}\right)$ connection is now 
\beq
  \mathcal{A}=\left(nA+\lf{0}\right)t_0+\frac{1}{2}\Phi^{n}\sigma t_{-}+\frac{1}{2}\bar{\Phi}^{n}\bar{\sigma}t_{+},
\eeq
and the bundle map $U$ which covers the rational map $f$ satisfies
\beq
  \Phi^{n}\sigma=U^{*}\tau , \qquad nA=U^{*}\tau^{0}-\lf{0}.
\eeq
In Ref.~\cite{Ross:2021}, the notation $\tau^{a}$ was used for the left-invariant one forms on the target of $U$, here $\HH{C_{2n}}$.
Corollary 4.5 in Ref.~\cite{Ross:2021}, then
gives that 
\beq
f^{*}\hat{A}=r_{f_{1}}^{-1}s^{*}\mathcal{A}r_{f_{1}}+r_{f_{1}}^{-1}\d r_{f_{1}},
\eeq
with $f=\frac{f_{2}}{f_{1}}$, the rational function between the
Riemann surfaces and $r_{p}:M_0\backslash \{q_{j}\} \to \HH{C_{2n}}$.

\subsection{Zero-modes}\label{sec:zeromodes}

Following the result presented in Ref.~\cite{Ross:2021} for Manton's
five vortex equations, the $n$-vortex equations considered here can
also be described as zero-modes of twisted Dirac operators on the group
manifold. In this fixed geometry case, the geometry of the group manifold is unchanged from that considered in Ref.~\cite{Ross:2021}, so the only difference is that some results need to include explicit factors of $n$ to account for the factors of $n$ in the $n$-vortex equations. To avoid the singularity in the metric corresponding to a degenerate
Killing form, in this section we shall restrict to the case of
$C_0\neq 0$. 

The Killing form on $\HH{C_0}$ gives rise to the metric
\beq
\d s^{2}_{C_0}=\frac{1}{C_0}\left(\d x^{0}\right)^{2}+\left(\d x^{1}\right)^{2}+\left(\d x^{2}\right)^{2},
\label{eq:Killing_metric}
\eeq
where we make use of the oriented (pseudo) orthonormal co-frame
$(\d x^{0},\d x^{1},\d x^{2})$ with volume form
$\d\text{Vol}=\d x^{0}\wedge\d x^{1}\wedge\d x^{2}$.
We call the Lie algebra equipped with this metric
\beq
    \R^{3}_{C_0}=\left(\R^{3},\d s^{2}_{C_0}\right),
\eeq
and once we have zero-modes on the group manifold we can pull them back
to the Lie algebra to construct zero-modes there.
This requires using the inverse stereographic and gnomonic maps
constructed in Ref.~\cite{Ross:2021}: 
\beq
H,G : \R^{3}_{C_0}\to \HH{C_0},
\eeq
which are maps on the subspace
\beq
\mathcal{I}=\left\{\left.(x^{0},x^{1},x^{2})\in \R^{3}_{C_0}\right|C_0r^{2}>-1 \right\},
\eeq
where the distance $r$ is defined with respect to the
metric \eqref{eq:Killing_metric} as:
\beq
r^{2}=C_0(x^{0})^{2}+(x^{1})^{2}+(x^{2})^{2}.
\eeq
For $C_0=1$, corresponding to $\HH{1}=SU(2)$, $r^{2}>-1$ always holds
true, whereas for $C_0=-1$:
\beq
r^{2}=-(x^{0})^{2}+(x^{1})^{2}+(x^{2})^{2}<1,
\eeq
which means that $\mathcal{I}$ is the interior of a single-sheeted
hyperbola.
Adapting the expressions from Ref.~\cite{Ross:2021} we have 
\begin{align}
    H(\vec{x})&=\frac{1}{1+C_0r^{2}}\begin{pmatrix}
        1-C_0r^{2}+2\i x^{0}&2\i C_0(x^{1}-\i x^{2})\\
        2\i (x^{1}+\i x^{2})& 1-C_0r^{2}-2\i x^{0}
    \end{pmatrix},\\
    G(\vec{x})&=\frac{\mathbb{I}-2\vec{x}\cdot\vec{t}}{\sqrt{1+C_0r^{2}}}.
\end{align}
These matrices are related through $H(\vec{x})=(G(\vec{x}))^{2}$ and
are the key to relating the Dirac operator on $\HH{C_0}$ to the Dirac
operator on $\R^{3}_{C_0}$.
\begin{lemma}
    If $\Psi:\HH{C_0}\to \C^{2}$ is a magnetic Dirac mode of the Dirac operator $D_{\HH{C_0}, A}$, for the $\U(1)$ gauge field $A$, then
    \beq
        \Psi_{H}=G\Omega^{-1}H^{*}\Psi,
    \eeq
    is a magnetic Dirac mode on $\R^{3}_{C_0}$ with the Dirac operator $D_{\R^{3}_{C_0},H^{*}A}$.
\end{lemma}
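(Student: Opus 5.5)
The plan is to reduce the lemma to two facts: a conformal rescaling of the Dirac operator, and the intertwining of the spin structures on $\HH{C_0}$ and $\R^{3}_{C_0}$ by the frame rotation $G$. This is the same route as in Ref.~\cite{Ross:2021}. The factor $n$ does not enter: the statement concerns a $\U(1)$ connection $A$ whose magnetic field has no $n$-dependence in the definition of a magnetic mode. So the $n=1$ argument should carry over once we check that nothing in it uses the specific form of $F_A$.

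\textbf{Step 1: identify the conformal factor.} Compute $H^*(h^{-1}\d h)$ directly from the explicit formula for $H$, using $H=G^2$, and show it has the form
\[
H^{*}\left(h^{-1}\d h\right)=\Omega\, G^{-1}\left(\d x^{a}\,\rho(t_a)\right)G
\]
up to normalisation of the generators, with $\Omega=2/(1+C_0r^2)$. Here $\rho$ is the adjoint action on the $t_a$, acting as a rotation $R(G)\in\mathrm{SO}(3)$ (or $\mathrm{SO}(2,1)$ when $C_0=-1$). Consequently $H^*\sigma^a=\Omega\,R(G)^a{}_b\,\d x^b$. This says that $H$ is a conformal map from $\mathcal{I}\subset\R^3_{C_0}$, with the Killing metric, onto its image in $\HH{C_0}$, with conformal factor $\Omega^2$. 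It also says that the pulled-back left-invariant frame differs from the coordinate frame $\d x^a$ by the local rotation $R(G)$, whose spin lift is $G$ itself, since $G\in\HH{C_0}$.

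\textbf{Step 2: apply the conformal covariance of the Dirac operator in three dimensions.} For a metric $\Omega^2 g$ one has $D_{\Omega^2 g,A}\,(\Omega^{-1}\psi)=\Omega^{-2}D_{g,A}\,\psi$, after identifying spinor bundles through the frame change. The twisting by a $\U(1)$ connection commutes with both the conformal rescaling and the frame rotation, because $A$ acts as a scalar on $\C^2$; this is why the gauge field simply becomes $H^*A$. Combining the weight $\Omega^{-1}$ with the spin-frame change $G$ gives the map $\Psi\mapsto G\,\Omega^{-1}H^*\Psi$, and this map intertwines $D_{\HH{C_0},A}$ with $\Omega^{-2}$ times $D_{\R^3_{C_0},H^*A}$. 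Two bookkeeping points need care here:
\begin{itemize}
\item the constant curvature term in $D_{\HH{C_0}}$ coming from the left-invariant frame's spin connection $\tfrac12\epsilon_{abc}\sigma^a t_bt_c$ type term;
\item the signature factor $1/C_0$ in the $x^0$ direction.
\end{itemize}
Both should be absorbed by the same computation, provided the Dirac operator on $\HH{C_0}$ is defined, as in Ref.~\cite{Ross:2021}, with its constant shift included in the notion of ``magnetic Dirac mode''.

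\textbf{Step 3: transfer the magnetic-mode equations.} The zero-mode equation $D\Psi=0$ is invariant under multiplication by a nonvanishing factor, so it passes to $\Psi_H$. The magnetic condition $F_A=\Psi^\dagger\sigma\Psi$-type relation, i.e.\ that the field strength equals the spinor bilinear current, transforms consistently as well. Under $H$, the Hodge dual of the 2-form $F$ scales by $\Omega^{-1}$ relative to the 1-form, while the bilinear built from $\Omega^{-1}G\Psi$ picks up $\Omega^{-2}$, and the rotation $G$ carries $\Psi^\dagger t_a\Psi$ into the rotated frame components.

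The main obstacle is Step 1 together with the matching of weights in Step 3. The difficulty is to verify explicitly that $H^*\sigma^a=\Omega R(G)\d x$ holds with exactly the factor that makes both the Dirac equation and the bilinear condition scale compatibly, with the signature handled uniformly in $C_0=\pm1$. The plan is to verify this by direct matrix computation using $H=G^2$ and $\d G\,G^{-1}$, and otherwise to cite the corresponding computation in Ref.~\cite{Ross:2021}, whose frames are unchanged in the fixed-geometry setting.
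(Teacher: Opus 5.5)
Your route is the paper's: the paper proves this lemma by deferring to Lemma 5.3 of Ref.~\cite{Ross:2021}, noting that $n$ plays no role in the Dirac-mode construction. Your Steps 1--2 (conformality of $H$, frame rotation lifted by $G$, conformal covariance of the twisted Dirac operator) sketch that cited argument. Two points in your sketch need correcting.

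\textbf{The conformal weight is inverted.} You take $\Omega=2/(1+C_0r^2)$ with $H^*\sigma^a=\Omega R(G)^a{}_b\d x^b$, so the pulled-back group metric is $\Omega^2$ times the flat one. You then quote $D_{\Omega^2 g,A}(\Omega^{-1}\psi)=\Omega^{-2}D_{g,A}\psi$. That identity says the curved spinor is $\Omega^{-1}$ times the flat one, so the flat zero mode obtained from $H^*\Psi$ is $\Omega\,G H^*\Psi$. The map $G\Omega^{-1}H^*\Psi$ you write down is $\Omega^{-2}$ times this, and $D_{\R^3_{C_0},H^*A}$ does not annihilate it, because the Clifford action of $\d\Omega^{-2}$ survives. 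For the lemma's formula to hold, its $\Omega$ (inherited from Ref.~\cite{Ross:2021}) must be the reciprocal of yours: $H^*\d s^2_{\HH{C_0}}=\Omega^{-2}\d s^2_{C_0}$, with $\Omega\propto 1+C_0r^2$ up to the normalisation of the generators. As a check, the Loss--Yau zero mode is $(1+r^2)^{-3/2}(\mathbb{I}-2\vec x\cdot\vec t)\phi_0=(1+r^2)^{-1}G\phi_0$ with $\phi_0$ constant. This has the form $\Omega^{-1}GH^*\Psi$ only if $\Omega$ grows like $1+r^2$.

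\textbf{Step 3 is superfluous, and not true as stated for general $n$.} In this paper a magnetic Dirac mode for $A$ means only $D_{\HH{C_0},A}\Psi=0$; this is how the term is used in the proof of the Proposition. The field equation belongs to the separate notion of a vortex magnetic mode and is not part of the lemma. Once the weight is fixed, the first sentence of your Step 3 already finishes the proof. The reason $n$ does not enter is simply that the Dirac equation contains no $n$, not that the magnetic field is $n$-independent; in a vortex magnetic mode it is not.

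In fact the $n$-vortex field equation would not transport as you claim. The term $|\Psi|^{2n-2}\star\Psi^\dagger h^{-1}\d h\Psi$ has degree $2n$ in $\Psi$. Pulling it back by $H$ and rewriting it in terms of $\Psi_H$ leaves an explicit factor $\Omega^{2n-2}$ in the lemma's convention. Only the quadratic, $n=1$, Loss--Yau-type relation is conformally invariant in the way you describe.
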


The proof is given as the proof of Lemma 5.3 in Ref.~\cite{Ross:2021},
as the $n$-dependence is absent in the construction of the Dirac
zero-modes there is no difference.

\begin{definition}
  A vortex magnetic mode is a pair $(\Psi,A)$ of a spinor and a one
  form $A$ on $\HH{C_0}$ which satisfy the equations 
    \beq
        D_{\HH{C_0}, A}\Psi =0, \qquad F_{A}=-\frac{C_{2n}}{C_0n}4\i |\Psi|^{2n-2}\star\Psi^{\dagger}h^{-1}\d h\Psi+\frac{C_0}{4n}\lf{1}\wedge\lf{2}.
    \eeq
\end{definition}
\begin{proposition}
    Given a vortex configuration $(\Phi,A)$ on $\HH{C_0}$, the pair
    \beq
        \Psi=\begin{pmatrix}
            \Phi\\
            0
        \end{pmatrix}, \qquad A'=A+\frac{3}{4}\lf{0},
    \eeq
    is a vortex magnetic mode on $\HH{C_0}$.
\end{proposition}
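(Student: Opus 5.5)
We verify the two defining equations of a vortex magnetic mode separately, following the proof of the $n=1$ analogue (Proposition 5.5 in Ref.~\cite{Ross:2021}) and tracking the extra factors of $n$.

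\emph{Dirac equation.} We write $D_{\HH{C_0},A'}$ in the left-invariant frame $X_a$ dual to $\lf{a}$. For the Killing metric \eqref{eq:Killing_metric}, the spin connection of a Lie group with left-invariant frame is constant. This gives
\beq
D_{\HH{C_0},A'} = \sum_a \tau_a\left(X_a + \i A'(X_a)\right) + c,
\eeq
where $\tau_a$ are suitably normalised Pauli-type matrices and $c$ is a constant multiple of the identity proportional to the structure constants. The shift $A' = A + \frac34\lf{0}$ is chosen precisely to cancel $c$ against the $t_0$-component acting on the upper component; this is the same computation as for $n=1$, since the geometry of $\HH{C_0}$ does not depend on $n$. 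The Dirac equation for $\Psi=(\Phi,0)^T$ then splits into two pieces:
\begin{itemize}
\item the lower component gives $(X_1+\i X_2)\Phi + \i A(X_1+\i X_2)\Phi = 0$, i.e.\ $(\d\Phi+\i A\Phi)(\bar X)=0$, which is equivalent to $(\d\Phi+\i A\Phi)\wedge\sigma=0$;
\item the upper component gives $(X_0+\i A(X_0))\Phi=0$, i.e.\ the $\lf{0}$-component of $\d_A\Phi$ vanishes.
\end{itemize}
The second condition follows from the first. Wedging $\d_A\Phi$ with $\sigma$ kills the $\sigma$-part but leaves the $\lf{0}\wedge\sigma$ term, so vanishing of the wedge forces both the $\lf{0}$- and $\bar\sigma$-coefficients to be zero. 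In words, $\Phi$ is covariantly constant along the fibre, as for the $n=1$ vortex configuration. Hence $D_{\HH{C_0},A'}\Psi=0$.

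\emph{Curvature equation.} We compute both sides.
\begin{itemize}
\item Left-hand side: $F_{A'} = \d A + \frac34\d\lf{0}$. The Maurer--Cartan equation $\d(h^{-1}\d h) + h^{-1}\d h\wedge h^{-1}\d h=0$ with \eqref{eq:commutators} gives $\d\lf{0} = -C_0\,\lf{1}\wedge\lf{2}$ up to normalisation. The $n$-vortex equation of the Definition gives $\d A = -\frac{\i}{2}\bigl(\frac{C_0}{n}-\frac{C_{2n}}{n}|\Phi|^{2n}\bigr)\bar\sigma\wedge\sigma$, and we use $\bar\sigma\wedge\sigma = 2\i\,\lf{1}\wedge\lf{2}$.
\item Right-hand side: with $\Psi=(\Phi,0)^T$, only the $t_0$ entry of $h^{-1}\d h$ contributes, so $\Psi^\dagger h^{-1}\d h\Psi$ is proportional to $|\Phi|^2\lf{0}$. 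The Hodge star for \eqref{eq:Killing_metric} gives $\star\lf{0}\propto C_0\,\lf{1}\wedge\lf{2}$. The prefactor $|\Psi|^{2n-2}$ then upgrades $|\Phi|^2$ to $|\Phi|^{2n}$, and the factor $C_{2n}/(C_0 n)$ cancels the $C_0$ from the star and supplies the $C_{2n}/n$.
\end{itemize}
Matching coefficients of $\lf{1}\wedge\lf{2}$ then yields the stated equation. The $|\Phi|^{2n}$ terms agree automatically. The constant terms must satisfy $-\frac{C_0}{n}\cdot(\text{norm}) + \frac34(-C_0) = \frac{C_0}{4n}$ (in suitable units), and checking this is where the $n$-dependence must be reconciled.

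\emph{Main obstacle.} The obstacle is this constant bookkeeping. The shift $\frac34\lf{0}$ is $n$-independent, whereas the vortex equation carries $C_0/n$, so the constant terms need not line up without care. I would redo the $n=1$ computation of Ref.~\cite{Ross:2021} with all normalisations explicit: the factors of $i/2$, $\mathrm{tr}$ conventions, and the $\star$ orientation. I would then verify that the $\frac34\d\lf{0}$ contribution combines with $-\frac{C_0}{n}$ to produce exactly the $\frac{C_0}{4n}\lf{1}\wedge\lf{2}$ term. If it does not, the discrepancy signals that either the shift or the constant term should carry an $n$, and I would fix conventions accordingly.
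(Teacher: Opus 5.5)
\emph{Verdict.} There is a genuine gap, and it cannot be closed for $n\neq1$: you stop at the one check that decides the statement, and that check fails.

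\emph{Dirac equation.} This part is the paper's argument and is fine. The equation $(\d\Phi+\i A\Phi)\wedge\sigma=0$ kills both the $\sigma^0$- and the $\bar\sigma$-components of $\d\Phi+\i A\Phi$, giving $X_0\Phi+\i A_0\Phi=0$ and $X_+\Phi+\i A_+\Phi=0$. The $n$-independent spin-connection constant $-\frac{3\i}{4}$ is then absorbed by $A'_0=A_0+\frac34$. One small correction: your first bullet says the $X_+$ condition alone is equivalent to the wedge equation, which is too strong.

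\emph{Curvature equation.} Here you leave the outcome of the constant-term check open. Carry it out with the paper's conventions, $-\frac{\i}{2}\bar\sigma\wedge\sigma=\sigma^1\wedge\sigma^2$ and $\d\sigma^0=-C_0\,\sigma^1\wedge\sigma^2$ (from Maurer--Cartan). This gives $F_{A'}=\bigl[C_0\bigl(\tfrac1n-\tfrac34\bigr)-\tfrac{C_{2n}}{n}|\Phi|^{2n}\bigr]\sigma^1\wedge\sigma^2$. For any normalisation of $\star$ and $t_0$, the spinor term $|\Psi|^{2n-2}\star\Psi^\dagger h^{-1}\d h\Psi$ is proportional to $|\Phi|^{2n}$. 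So the $\Phi$-independent parts must agree on their own. The definition then demands $C_0\bigl(\frac1n-\frac34\bigr)=\frac{C_0}{4n}$, i.e.\ $\frac{3}{4n}=\frac34$, which holds only for $n=1$ (recall $C_0\neq0$ in this section). The paper's proof writes exactly these intermediate expressions and then asserts the final equality, so it makes this arithmetic slip. As written, the proposition is false for $n\neq1$.

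\emph{Why your fallback does not help.} You suggest letting the shift carry an $n$, but the shift is not free. For $\Psi=(\Phi,0)^T$, the Dirac equation together with the vortex equation forces $A'(X_0)=A(X_0)+\frac34$ and $A'(X_+)=A(X_+)$ wherever $\Phi\neq0$. Since $A'$ is real, $A'=A+\frac34\sigma^0$ exactly, because the geometry of $\HH{C_0}$, and with it the constant $\frac34$, does not depend on $n$.

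\emph{Consistent repairs.} There are two.
\begin{enumerate}
\item Replace $\frac{C_0}{4n}$ in the definition of a vortex magnetic mode by $\frac{(4-3n)C_0}{4n}$. Your outline, and the paper's, then goes through verbatim.
\item For integer $n$, note that $(\Phi^n,nA)$ solves the $n=1$ vortex equations with $C=C_{2n}$. Indeed $\d\Phi^n+\i nA\Phi^n=n\Phi^{n-1}(\d\Phi+\i A\Phi)$ and $F_{nA}=(C_0-C_{2n}|\Phi|^{2n})\sigma^1\wedge\sigma^2$. Then apply the $n=1$ result to $\Psi=(\Phi^n,0)^T$ and $A'=nA+\frac34\sigma^0$. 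This gives the constant $\frac{C_0}{4}$ and needs no $|\Psi|^{2n-2}$ factor.
\end{enumerate}
Your suspicion was correct, but a proof proposal has to resolve that point rather than defer it.
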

\begin{proof}
  Since $(\Phi,A)$ is a vortex configuration it solves
  \beq
  X_0\Phi+\i A_0\Phi=0, \qquad X_{+}\Phi+\i A_{+}\Phi = 0,
  \eeq
  where $X_a$ are dual left-invariant vector fields that generate the right action $h\to h t_a$ with commutation relations
  \beq
  [X_0,X_\pm]=\mp\i X_\pm, \qquad
  [X_+,X_-]=-2\i C X_0,
  \eeq
  and in terms of complex coordinates can be written as
  \begin{align}
  X_- &= -\i\left(\bar{z}_1\p_2 - C\bar{z}_2\p_1\right),\qquad X_+=\overline{X}_-,\\
  X_0 &= -\frac{\i}{2}\left(z_1\p_1 + z_2\p_2 - \bar{z}_1\bar{\p}_1 - \bar{z}_2\bar{\p}_2\right),
  \end{align}
  yielding the (only) non-vanishing combinations
  \beq
  \sigma^0(X_0) = 1, \qquad
  \sigma(X_-) = \bar{\sigma}(X_+) = 2.
  \eeq
  While for $\Psi$ to be a magnetic Dirac mode it needs to satisfy 
  \beq
  D_{\HH{C_0}, A}\Psi =0,
  \eeq
  which is equivalent to
  \begin{align*}
    X_0\Phi+\i A_0'\Phi-\frac{3\i }{4}\Phi&=0,\\ 
    X_{+}\Phi+\i A_{+}\Phi&=0.
  \end{align*}
  Using $A'_0=A'(X_0)=A_0+\frac{3}{4}$ and $A'_{+}=A_{+}$, we see that
  the vortex configuration equations imply that $\Psi$ is a magnetic
  Dirac mode.

  Now we turn to the second equation.
  Note that a spinor with the form of $\Psi$ satisfies
  \begin{align}
    -\frac{C_{2n}}{C_0n}4\i |\Psi|^{2n-2}\star\Psi^{\dagger}h^{-1}\d h\Psi
    &=\frac{C_{2n}}{n}|\Phi|^{2n}\lf{2}\wedge\lf{1},
  \end{align}
  while
  \begin{align}
    F_{A'}  &=F_{A}+\frac{3}{4}C_0\lf{2}\wedge\lf{1}=\left(\frac{C_0}{n}-\frac{C_{2n}}{n}|\Phi|^{2n}\right)\lf{1}\wedge\lf{2}+\frac{3}{4}C_0\lf{2}\wedge\lf{1}\\
    &=-\frac{C_{2n}}{C_0n}4\i |\Psi|^{2n-2}\star\Psi^{\dagger}h^{-1}\d h\Psi+\frac{C_0}{4n}\lf{1}\wedge\lf{2},
  \end{align}
  which is the non-linear equation in the definition of a vortex
  magnetic mode. 
\end{proof}

\begin{corollary}
  Given a holomorphic map $f:M_0\to M_{2n}$ which determines a vortex,
  this gives rise to a smooth vortex magnetic mode on
  $\mathcal{I}\subset \R^{3}_{C_0}$.
  Lifting the vortex to a vortex configuration $(\Phi,A)$, the vortex
  magnetic mode is 
    \beq
    \Psi_{H}=\Omega^{-1}G\begin{pmatrix}
    H^{*}\Phi\\
    0
    \end{pmatrix},
    \qquad A'=H^{*}\left(A+\frac{3}{4}\lf{0}\right).
    \eeq
\end{corollary}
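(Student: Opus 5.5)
The corollary follows by composing three results already in hand: the lift of the vortex to $\HH{C_0}$ from Section~\ref{sec:cartan_geometry}, the Proposition just proved, and the Lemma on pulling back magnetic Dirac modes by $H$.

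\emph{Step 1 (lift the vortex).} Start from the holomorphic map $f:M_0\to M_{2n}$. Section~\ref{sec:local_geometry} gives the pair $(\phi,A)$ on $M_0$ solving the $n$-vortex equations. Write $f=f_2/f_1$ and build the bundle map $U:\HH{C_0}\to\HH{C_{2n}}$ covering $f$. Define $\Phi$ and $A$ on $\HH{C_0}$ by
\beq
\Phi^n\sigma=U^*\tau, \qquad nA=U^*\tau^0-\lf{0}.
\eeq
Pulling back the Maurer--Cartan equations $\d\tau+\ldots=0$ of $\HH{C_{2n}}$ along $U$ gives the flatness of $\mathcal{A}$. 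Flatness of $\mathcal{A}$ is equivalent to $(\Phi,A)$ being an $n$-vortex configuration, by the $n$-modified Theorem 4.3 of Ref.~\cite{Ross:2021}.

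Smoothness of $(\Phi,A)$ on all of $\HH{C_0}$ needs a separate argument. Near a fibre where $f_1$ vanishes, the $r_{f_1}$ gauge transformation of Corollary 4.5 in Ref.~\cite{Ross:2021} shows the apparent singularity is pure gauge.

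One subtlety concerns $\Phi^n$ when $n$ is not an integer. Only $\Phi^n$ appears, and the vortex zeros are where $U^*\tau$ vanishes. It is therefore cleanest to take $\Phi^n$ as the fundamental smooth object. For integer $n$ one can extract a smooth $\Phi$ locally, since the ramification orders of $f$ give zeros of $\Phi^n$ of the appropriate multiplicity. I expect this smoothness-and-root issue to be the main obstacle. I would treat integer $n$ and note that the remaining identities only involve $|\Phi|^{2n}$ and $\Phi$ through the linear Dirac equation.

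\emph{Step 2 (apply the Proposition).} Apply the Proposition to get the vortex magnetic mode on $\HH{C_0}$:
\beq
\Psi=\begin{pmatrix}\Phi\\ 0\end{pmatrix}, \qquad A+\tfrac34\lf{0}.
\eeq

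\emph{Step 3 (pull back by $H$).} Apply the Lemma with the gauge field $A+\frac34\lf{0}$. Then $G\Omega^{-1}H^*\Psi$ is a magnetic Dirac mode on $\R^3_{C_0}$ for the gauge field $H^*(A+\frac34\lf{0})=A'$.

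Since $\Omega$ is a scalar conformal factor, it commutes with $G$. This gives $\Omega^{-1}G(H^*\Phi,0)^T$, as in the statement.

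Smoothness on $\mathcal{I}$ follows because $H$, $G$ and $\Omega^{-1}$ are smooth there, as $1+C_0r^2>0$ on $\mathcal{I}$. Composing with the smooth $\Psi$ from Steps 1 and 2 then gives a smooth vortex magnetic mode on $\mathcal{I}$.
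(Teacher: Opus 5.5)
Your proposal is correct and is the paper's own argument: the paper's proof consists of composing three results, namely the lift of the vortex to an $n$-vortex configuration on $\HH{C_0}$, the Proposition, and the Lemma, and your remark that the scalar $\Omega^{-1}$ commutes with $G$ reconciles the order of factors in the statement with the one in the Lemma. One side remark needs correcting: the ramification of an arbitrary holomorphic $f$ does not supply a smooth $n$-th root of $\Phi^n$ (a simple critical point already gives a simple zero of $\Phi^n$), so this must come from the hypothesis that $f$ determines a smooth vortex, i.e.\ every ramification index $e$ of $f$ satisfies $n\mid e-1$; separately, smoothness of $\Phi^n$ and $A$ on $\HH{C_0}$ follows directly from smoothness of $U$ and needs no $r_{f_1}$ argument.
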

The proof of this result follows by stringing together all of the
other results.

\section{Normalised vortex equations with infinite family of
  geometries}\label{sec:normalised_vtxeqs}

The second BPS equation for the Taubes vortex with physical constants included reads
\beq
\d A = e^2\left(v^2 - |\psi|^2\right)\omega_0,
\eeq
which by the field redefinition $\phi=\psi/v$ (where $v>0$) can be written as
\beq
\d A = m^2\left(1 - |\phi|^2\right)\omega_0,
\eeq
and $m^2$ can be stowed away by changing to dimensionless units $z\to \frac{1}{m}z$ and $A\to m A$ -- this takes us to the vortex equation \eqref{eq:BPS2}, which in the Taubes case corresponds to $C_0=C=-1$.
Indeed, the parameter $m$ is physically the mass of the photon as well as the mass of the scalar field -- they are equal in the BPS case -- and is given by $m=ev$ with $e$ the gauge coupling constant and $v$ the vacuum expectation value (VEV) of the Higgs potential for the scalar field.

We will continue to work in dimensionless units where $e v$ have been absorbed into the length units of the coordinates on $M_0$, but the difference in this section compared with the approach in Sec.~\ref{sec:fixed_geometry} is that here we will also include the factor of $1/\sqrt{n}$ in the mass when rescaling the coordinates on $M_0$.
In order to avoid confusion, we define the new coordinate
\beq
w = \frac{1}{\sqrt{n}} z,\qquad
\d w = \frac{1}{\sqrt{n}} \d z,
\label{eq:vars_zw}
\eeq
such that the $n$-vortex equation \eqref{eq:vtx2} in $z$ coordinates is just Eq.~\eqref{eq:vtx2}:
\beq
F_{A}=\d A
=\left(-\frac{C_0}{n}+\frac{C_{2n}}{n}|\phi|^{2n}\right)\omega_0,
\label{eq:vtx2_repeat}
\eeq
but in $w$ coordinates it becomes Eq.~\eqref{eq:vtx2_2n_Gudnason:2022}:
\beq
F_{A}=\d A
=\left(-C_0+C_{2n}|\phi|^{2n}\right)\omega_0,
\label{eq:vtx2_2n_Gudnason:2022_repeat}
\eeq
as defined in Ref.~\cite{Gudnason:2022} with the standard normalisation (we have repeated the two equations here for convenience).
Notice that the discussion of the change of coordinates is only true locally: a consistent construction requires a change in the geometry.
That is, the local co-frame giving rise to the second vortex equation \eqref{eq:vtx2} (and \eqref{eq:vtx2_repeat}) is Eq.~\eqref{eq:local_frame}, whereas if we change the local co-frame to\footnote{This is not the same as performing the change of variable using Eq.~\eqref{eq:vars_zw}, since there is no factor of $\sqrt{n}$ in the denominator of the local co-frame field \eqref{eq:local_frame_new}.}
\beq
e_0 = \frac{2\d w}{1 + n C_0 |w|^2},
\label{eq:local_frame_new}
\eeq
the spin connection is
\beq
\Gamma_0 = \i n C_0 \frac{w\d\bar{w} - \bar{w}\d w}{1+n C_0|w|^2},
\eeq
the scalar curvature becomes $K_0=nC_0$ which follows from Gauss' equation
and the second vortex equation becomes that of Eq.~\eqref{eq:vtx2_2n_Gudnason:2022} (and \eqref{eq:vtx2_2n_Gudnason:2022_repeat}) since the overall factor of $n$ in $\Gamma_0$ and in $f^*\Gamma_{2n}$ is cancelled by the $1/n$ in the definition of the Abelian connection \eqref{eq:A}.
Changing back to the $z$ coordinates, the new local co-frame field becomes
\beq
e_0 = \frac{2\sqrt{n} \d z}{1 + C_0 |z|^2},
\eeq
which can be interpreted, for $C_0=1$, as the geometry of a 2-sphere of radius $R=\sqrt{n}$. 
With this interpretation, it is also clear why $n>0$ should be strictly positive.
The $n$-dependent radius of $g_0=e_0\bar{e}_0$ carries over to the Baptista metric $\tilde{g}=|\phi|^{2n}g_0$.

The non-Abelian connection of Eq.~\eqref{eq:Ahat} now becomes
\beq
\hat{A} = -\Gamma_{2n}t_0+\frac\i2\sqrt{n}\left(e_{2n}t_{-}-\bar{e}_{2n}t_{+}\right),
\eeq
whose pullback by $f$ reads
\beq
f^*\hat{A} = -\left(nA+\Gamma_0\right)t_0+\frac{\i\sqrt{n}}{2}\left(\phi^{n}e_0t_{-}-\bar{\phi}^{n}\bar{e}_0t_{+}\right).
\eeq
Computing the curvature, we now have
\begin{align}
  F_{f^{*}\hat{A}}
  &=-n\left[\d A-\left(-C_0+C_{2n}|\phi|^{2n}\right)\frac{\i}{2}e_0\wedge\bar{e}_0\right]t_0\\
&\phantom{=\ }+\frac{\i}{2}n^{\frac32}\left[\phi^{n-1}\left(\d\phi -\i A\phi\right)\wedge e_0 t_{-}-\bar{\phi}^{n-1}\left(\d\bar{\phi} +\i A\bar{\phi}\right)\wedge \bar{e}_0 t_{+}\right],
\end{align}
where the commutation relations remain those of Eq.~\eqref{eq:commutators} and we have used the structure equation for the complex co-frame \eqref{eq:local_frame_new}.
The vanishing of $F_{f^*\hat{A}}$ leads to all the vortex equations, but with the $n$-dependence removed from the second vortex equation, as in Eq.~\eqref{eq:vtx2_2n_Gudnason:2022_repeat}.

For the story upstairs, following the diagram in Sec.~\ref{sec:cartan_geometry}, the Maurer-Cartan structure remains the same, but the projection and section will now acquire the following $n$-dependence
\begin{align}
\pi^*e &= -\frac{\i}{\sqrt{n}}\frac{\bar{z}_{1}}{z_{1}}\sigma,\qquad&
s^*\sigma &= \i\sqrt{n} e,\\
\pi^*\Gamma&=-\sigma^{0}+\i\d\ln\left(\frac{z_{1}}{\bar{z}_{1}}\right),\qquad&
s^*\sigma^0 &= -\Gamma.
\end{align}
Using these everything from Secs.~\ref{sec:cartan_geometry} and \ref{sec:zeromodes} then follows through again, but now we are in exactly the setting of \cite{Ross:2021} with no extra factors of $n$ needed upstairs.

\section{Conclusion and outlook}\label{sec:conclusion}

In this paper we have considered the relationship between the new integrable vortex equations of Ref.~\cite{Gudnason:2022} and the geometric interpretation of vortex equations from Ref.~\cite{Ross:2021}. This provides a geometric description of all of the new vortex equations. 
While the vortex polynomial does not make sense for a non-integer value of $n$, the calculations carried out here, and the geometric interpretation do not require $n$ to be an integer. Thus all of these results for the infinite family of vortex equations generalise to the case where $n\in\mathbb{R}_{>0}$, i.e.~a real positive number.

We have only explored the construction of vortex zero-modes when $C_{0}\neq 0$. However, zero-modes, sometimes known as harmonic spinors, can be constructed in this case by centrally extending $\HH{0}$ to the Nappi-Witten group. This is carried out in Ref.~\cite{Ross:2026}.

There are a few other future directions. The geometric interpretation of \cite{Ross:2021} relies on the uniformisation theorem that all connected Riemann surfaces are equivalent to either $\C,S^{2},\mathbb{H}^{2}$, or quotients of these by a discrete group. The analogue of this for vortex configurations on group manifolds is Thurston's geometrisation conjecture \cite{thurston1997three}, which has eight model geometries. Three of these model geometries have been used so far, but it is not yet known if vortex configurations can be constructed on the others.

\subsection*{Acknowledgements}
S.~B.~G.~thanks the Outstanding Talent Program of Henan University for
partial support.
During the final stage of work on this project C.~R.~acknowledges funding from the Edge Hill Research Investment Fund.

\bibliographystyle{unsrt}
\bibliography{vortices.bib}

\begin{thebibliography}{10}

\bibitem{Manton2}
Nicholas~S. Manton.
\newblock {Five Vortex Equations}.
\newblock {\em J. Phys. A}, 50(12):125403, 2017.

\bibitem{Bradlow:1990ir}
S.~B. Bradlow.
\newblock {Vortices in holomorphic line bundles over closed Kahler manifolds}.
\newblock {\em Commun. Math. Phys.}, 135:1--17, 1990.

\bibitem{Gudnason:2017jsn}
Sven~Bjarke Gudnason and Muneto Nitta.
\newblock {Some exact Bradlow vortex solutions}.
\newblock {\em JHEP}, 05:039, 2017.

\bibitem{CD}
Felipe Contatto and Maciej Dunajski.
\newblock {Manton{\textquoteright}s five vortex equations from self-duality}.
\newblock {\em J. Phys. A}, 50(37):375201, 2017.

\bibitem{Ross:2021}
Calum Ross.
\newblock {Cartan connections and integrable vortex equations}.
\newblock {\em J. Geom. Phys.}, 179:104613, 2022.

\bibitem{RS1}
Calum Ross and Bernd~J Schroers.
\newblock {Magnetic Zero-Modes, Vortices and Cartan Geometry}.
\newblock {\em Lett. Math. Phys.}, 108(4):949--983, 2018.

\bibitem{RS2}
Calum Ross and Bernd~J. Schroers.
\newblock {Hyperbolic vortices and Dirac fields in 2+1 dimensions}.
\newblock {\em J. Phys. A}, 51(29):295202, 2018.

\bibitem{Gudnason:2022}
Sven~Bjarke Gudnason.
\newblock {Nineteen vortex equations and integrability}.
\newblock {\em J. Phys. A}, 55(40):405401, 2022.

\bibitem{Bobenko:2017zso}
Alexander~I. Bobenko and Ananth Sridhar.
\newblock {Abelian Higgs Vortices and Discrete Conformal Maps}.
\newblock {\em Lett. Math. Phys.}, 108(2):249--260, 2018.

\bibitem{Gudnason:2021bkw}
Sven~Bjarke Gudnason and Calum Ross.
\newblock {Magnetic impurities, integrable vortices and the Toda equation}.
\newblock {\em Lett. Math. Phys.}, 111(4):100, 2021.

\bibitem{Manton1}
N.~S. {Manton}.
\newblock {Vortex solutions of the Popov equations}.
\newblock {\em Journal of Physics A Mathematical General}, 46(14):145402, April
  2013.

\bibitem{Witten1}
Edward Witten.
\newblock Some exact multipseudoparticle solutions of classical yang-mills
  theory.
\newblock {\em Phys. Rev. Lett.}, 38:121--124, Jan 1977.

\bibitem{Ross:2026}
Calum Ross and Ra{\'u}l S{\'a}nchez~Gal{\'a}n.
\newblock {Vortex Harmonic Spinors on the Nappi–Witten Space}.
\newblock In preparation.

\bibitem{thurston1997three}
W.P. Thurston.
\newblock {\em Three-dimensional Geometry and Topology}.
\newblock Number~1. Princeton University Press, Princeton, 1997.

\end{thebibliography}

\end{document}